\crefname{equation}{}{Equations}
\Crefname{equation}{Eq.}{Equations}
\numberwithin{equation}{section}
\newtheorem{theorem}{Theorem}
\newtheorem{proposition}[theorem]{Proposition}
\newcommand\bbR{\mathbb{R}}
\newcommand\bbN{\mathbb{N}}
\newcommand\bbS{\mathbb{S}}
\newcommand\mU{\mathcal{U}}
\newcommand\Nshape{{N_{\mathrm{shape}}}}
\newcommand\dd{\,\mathrm{d}}     \newcommand\unit[1]{\widehat{#1}}
\newcommand\sffont[1]{{\sf{#1}}}
\newcommand\relu{{\sffont{ReLU}}}
\newcommand\id{{\sffont{id}}}
\newcommand\Convone{{\sffont{Conv1d }}}
\newcommand\Convtwo{{\sffont{Conv2d }}}
\newcommand\BCR{{\text{\sffont{BCR-Net}}}}
\newcommand\NN{\mathrm{NN}}
\newcommand\T{\mathsf{T}}
\newcommand\cnn{\mathrm{cnn}}
\title{Solving Inverse Wave Scattering with Deep Learning}
\date{}
\author{
  Yuwei Fan    \thanks{Department of Mathematics, Stanford University, Stanford, CA 94305. 
    Email: {\tt ywfan@stanford.edu}},~~
  Lexing Ying    \thanks{Department of Mathematics and ICME, Stanford University, Stanford, CA 94305.
    Email: {\tt lexing@stanford.edu}}
}
\begin{document}
\maketitle

\begin{abstract}  This paper proposes a neural network approach for solving two classical problems in the
  two-dimensional inverse wave scattering: far field pattern problem and seismic imaging. The
  mathematical problem of inverse wave scattering is to recover the scatterer field of a medium
  based on the boundary measurement of the scattered wave from the medium, which is high-dimensional
  and nonlinear. For the far field pattern problem under the circular experimental setup, a
  perturbative analysis shows that the forward map can be approximated by a vectorized convolution
  operator in the angular direction. Motivated by this and filtered back-projection, we propose an
  effective neural network architecture for the inverse map using the recently introduced BCR-Net
  along with the standard convolution layers. Analogously for the seismic imaging problem, we
  propose a similar neural network architecture under the rectangular domain setup with a
  depth-dependent background velocity.  Numerical results demonstrate the efficiency of the proposed
  neural networks.
\end{abstract}

{\bf Keywords:} Inverse scattering; Helmholtz equation; Far field pattern; Seismic imaging;
Neural networks; Convolutional neural network.

\section{Introduction}\label{sec:intro}
Inverse wave scattering is the problem of determining the intrinsic property of an object based on the
data collected from the object scatters incoming waves under the illumination of an incident wave,
which can be acoustic, electromagnetic, or elastic. In most cases, inverse wave scattering is
non-intrusive to the object under study and therefore it has a wide range of applications including
radar imaging \cite{borden2001mathematical}, sonar imaging \cite{greene1988acoustical}, seismic
exploration \cite{weglein2003inverse}, geophysics exploration \cite{verschuur1997estimation}, and
medicine imaging \cite{henriksson2010quantitative} and so on.

\paragraph{Background.}
We focus on the time harmonic acoustic inverse scattering in two dimensions. Let $\Omega$ be a
compact domain of interest. The inhomogeneous media scattering problem at a fixed frequency $\omega$
is modeled by the Helmholtz equation
\begin{equation}
  \label{eq:helmholtz}
  Lu := \bigg(-\Delta - \frac{\omega^2}{c^2(x)} \bigg) u,
\end{equation}
where $c(x)$ is the unknown velocity field. Assume that there exists a known background velocity
$c_0(x)$ such that $c(x)$ is identical to $c_0(x)$ outside the domain $\Omega$. By introducing the
{\em scatterer} $\eta(x)$:
\begin{equation}\label{eta def}
  \eta(x) = \frac{\omega^2}{c(x)^2} - \frac{\omega^2}{c_0(x)^2}
\end{equation}
compactly supported in $\Omega$, one can equivalently work with $\eta(x)$ instead of $c(x)$. Note
that in this definition $\eta(x)$ scales quadratically with the frequency $\omega$. However, as
$\omega$ is assumed to be fixed throughout this paper, this scaling does not affect our discussion.

In order to recover the unknown $\eta(\cdot)$, a typical setup of an experiment is as follows. For
each {\em source} $s$ from a source set $S$, one specifies an incoming wave (typically either a
plane wave or a point source) and propagates the wave to the scatterer $\eta(\cdot)$. The scattered
wave field $u^s(x)$ is then recorded at each {\em receiver} $r$ from a receiver set $R$ (typically
placed at the domain boundary or infinity). The whole dataset, indexed by both the source $s$
and the receiver $r$, is denoted by $\{d(s,r)\}_{s\in S, r\in R}$. The forward problem is to compute
$d(s,r)$ given $\eta(x)$. The inverse scattering problem is to recover $\eta(x)$ given $d(s,r)$,

Both the forward and inverse problems are computationally quite challenging, and in the past several
decades a lot of research has been devoted to their numerical solution \cite{erlangga2008advances,
  ernst2012difficult}. For the forward problem, the time-harmonic Helmholtz equation, especially in
the high-frequency regime $\omega\gg 1$, is hard to solve mainly due to two reasons: (1) the
Helmholtz operator has a large number of both positive and negative eigenvalues, with some close to
zero; (2) a large number of degrees of freedom are required for discretization due to the Nyquist
sampling rate. In recent years, quite a few methods have been developed for rapid solutions of
Helmholtz operator \cite{gander2001ailu, engquist2010fast, engquist2011pml, engquist2011hmatrix,
  fei2016recursive}.  The inverse problem is more difficult for numerical solution, due to the
nonlinearity of the problem. For the methods based on optimization, as the loss landscape is highly
non-convex (for example, the cycle skipping problem in seismic imaging \cite{mora1987nonlinear}),
the optimization can get stuck at a local minimum with rather large loss value. Other popular
methods include the factorization method and the linear sampling method
\cite{kirsch1999factorization, cheney2001linear, cakoni2011linear}.

\paragraph{A deep learning approach.}

Deep learning (DL) has recently become the state-of-the-art approach in many areas of machine
learning and artificial intelligence, including computer vision, image processing, and speech
recognition
\cite{Hinton2012,Krizhevsky2012,goodfellow2016deep,MaSheridan2015,Leung2014,SutskeverNIPS2014,leCunn2015,SCHMIDHUBER2015}.
From a technical point of view, this success can be attributed to several key developments: neural
networks (NNs) as a flexible framework for representing high-dimensional functions and maps,
efficient general software packages such as Tensorflow and Pytorch, computing power provided by GPUs
and TPUs, and effective algorithms such as back-propagation (BP) and stochastic gradient descent
(SGD) for tuning the model parameters,

More recently, deep neural networks (DNNs) have been increasingly applied to scientific computing
and computational engineering, particularly for PDE-related problems
\cite{khoo2017solving,berg2017unified,han2018solving,fan2018mnn,Araya-Polo2018, Raissi2018,
  kutyniok2019theoretical, feliu2019meta}. One direction focuses on the low-dimensional
parameterized PDE problems by representing the nonlinear map from the high-dimensional parameters of
the PDE solution using DNNs
\cite{long2018pde,han2017deep,khoo2017solving,fan2018mnn,fan2018mnnh2,fan2019bcr,li2019variational,bar2019unsupervised}. A
second direction takes DNN as an ansatz for high-dimensional PDEs
\cite{rudd2015constrained,carleo2017solving,han2018solving,khoo2019committor, weinan2018deep} since
DNNs offer a powerful tool for approximating high-dimensional functions and densities
\cite{cybenko1989approximation}.

As an important example of the first direction, DNNs have been widely applied to inverse problems
\cite{khoo2018switchnet,hoole1993artificial,kabir2008neural, adler2017solving, lucas2018using,
  tan2018image, fan2019eit, fan2019ot, raissi2019physics}. For the forward problem, as applying
neural networks to input data can be carried out rapidly due to novel software and hardware
architectures, the forward solution can be significantly accelerated once the forward map is
represented with a DNN. For the inverse problem, two critical computational issues are the choices
of the solution algorithm and the regularization term. DNNs can help on both aspects: first,
concerning the solution algorithm, due to its flexibility in representing high-dimensional
functions, DNN can potentially be used to approximate the full inverse map, thus avoiding the
iterative solution process; second, concerning the regularization term, DNNs often can automatically
extract features from the data and offer a data-driven regularization prior.

This paper applies the deep learning approach to inverse wave scattering by representing the whole
inverse map using neural networks. Two cases are considered here: (1) far field pattern and (2)
seismic imaging. In our relatively simple setups, the main difference between the two is the source
and receiver configurations: in the far field pattern, the sources are plane waves and the receivers
are regarded as placed at infinity; in the seismic imaging, both the sources and receivers are
placed at the top surface of the survey domain.

In each case, we start with a perturbative analysis of the forward map, which shows that the forward
map contains a vectorized one-dimensional convolution, after appropriate reparameterization of the
unknown coefficient $\eta(x)$ and the data $d(s,r)$. This observation suggests to represent the
forward map from $\eta$ to $d$ by a \emph{one-dimensional} convolution neural network (with multiple
channels). The filtered back-projection method \cite{norton1979ultrasonic} approximates the inverse
map with the adjoint of the forward map followed by a pseudo-differential filtering. This suggests
an inverse map architecture of reversing the forward map network followed by a simple
two-dimensional convolution neural network. For both cases, the resulting neural networks enjoy a
relatively small number of parameters, thanks to the convolutional structure. This small number of
parameters also allows for accurate and rapid training, even with a somewhat limited dataset.

\paragraph{Organization.}
This rest of the paper is organized as follows. \cref{sec:farfield} discusses the far field pattern
problem. \cref{sec:seismic} considers the seismic imaging problem. \cref{sec:conclusion} concludes
with some discussions for future work.

\section{Far field pattern}\label{sec:farfield}

\subsection{Mathematics analysis}

In the far field pattern case, the background velocity field $c_0(x)$ is constant, and without loss
of generality, equal to one. We introduce the base operator
$L_0=-\Delta-\omega^2/c_0^2=-\Delta-\omega^2$ and write $L$ in a perturbative way as
\begin{equation}
  L = L_0 - \eta.
\end{equation}
The sources are parameterized by $s\in S=[0,2\pi)$. For each source $s$, the incoming wave is a
plane wave $e^{i\omega \unit{s}\cdot x}$ with the unit direction given by
$\unit{s}=(\cos(s),\sin(s))\in\bbS^1$. The scattered wave $u^s(x)$ satisfies the following equation
\begin{equation}
  \label{eqn:usf}
  (L_0 - \eta) (e^{i\omega \unit{s}\cdot x} + u^s(x)) = 0,
\end{equation}
along with the Sommerfeld radiation boundary condition at infinity \cite{colton1998inverse}.
The receivers are also indexed by $r\in R=[0,2\pi)$.  The far field pattern at the unit direction
$\unit{r}=(\cos(r), \sin(r)) \in\bbS^1$ is defined as
\[
  \widehat{u}^s(r) \equiv \lim_{\rho\rightarrow\infty} \sqrt{\rho} e^{-i\omega\rho}
  u^s(\rho\cdot\unit{r}).
\]
The recorded data is the set of far field pattern from all incoming directions:
$d(s,r)\equiv\widehat{u}^s(r)$ for $r\in R$ and $s\in S$.

In order to understand better the relationship between $\eta(x)$ and $d(s,r)$, we perform a
perturbative analysis for small $\eta$. Expanding \eqref{eqn:usf} leads to 
\[
(L_0 u^s)(x) = \eta(x) e^{i\omega \unit{s}\cdot x} + \ldots,
\]
where $\ldots$ stands for higher order terms in $\eta$. Letting $G_0 = L_0^{-1}$ be the Green's
functions of the free-space Helmholtz operator $L_0$, we get
\[
  u^s(y) = \int G_0(y-x) \eta(x) e^{i\omega \unit{s}\cdot x} \dd x  + \ldots
\]
Using the expansion at infinity
\[
G_0(z) = \frac{ e^{i\omega |z|} + o(1) }{\sqrt{|z|}},
\]
we arrive at
\begin{equation} \label{eq:d1}
  \widehat{u}^s(r) = \lim_{\rho\rightarrow\infty} \sqrt{\rho} e^{-i\omega \rho} u^s(\rho \cdot \unit{r})  
  \approx \lim_{\rho\rightarrow\infty} \int \frac{ e^{i\omega (\rho-\unit{r}\cdot x)}  }{\sqrt{\rho}}  
  \sqrt{\rho} e^{-i\omega \rho}\eta(x)
  e^{i\omega \unit{s}\cdot x}  \dd x \\
  = \int e^{-i\omega(\unit{r}-\unit{s})\cdot x} \eta(x) \dd x \equiv d_1(s,r),
              \end{equation}
where the notation $d_1(s,r)$ stands for the first order approximation to $d(s,r)$ in $\eta$.

\subsubsection{Problem setup}
\begin{figure}[!ht]
  \centering
  \includegraphics[width=0.3\textwidth]{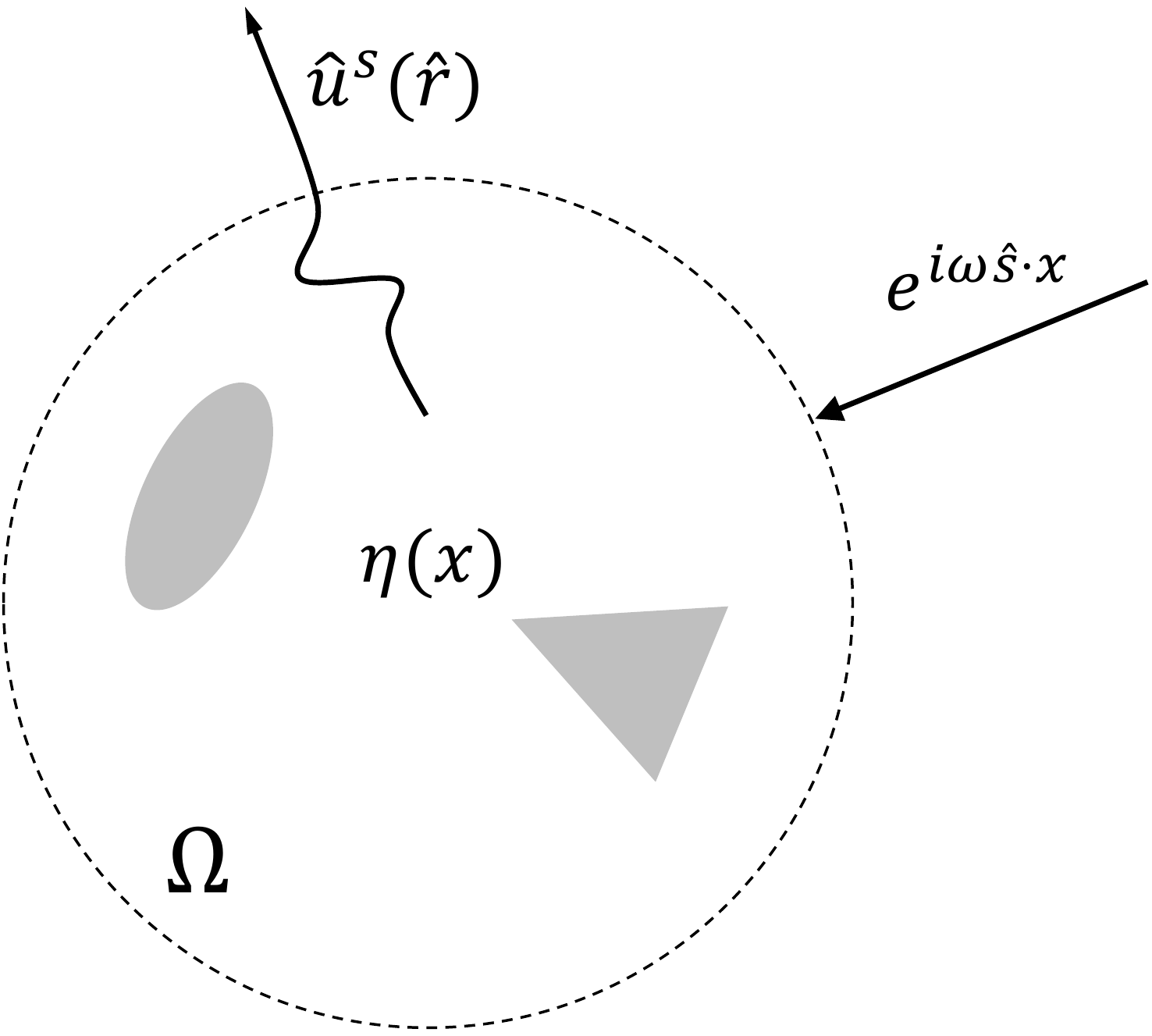}
  \caption{\label{fig:domainSca} Illustration of the incoming and outgoing waves for a far field
  pattern problem. The scatterer $\eta(x)$ is compactly supported in the domain $\Omega$. The
  incoming plane wave points at direction $\unit{s}=(\cos(s),\sin(s))$. The far field pattern is sampled at each
  receiver direction $\unit{r}=(\cos(r), \sin(r))$.
  }
\end{figure}

For the far field pattern problem, we are free to treat the domain $\Omega$ as the unit disk
centered at origin (by appropriate rescaling and translation), as illustrated in
\cref{fig:domainSca}.  In a commonly used setting, the sources and receivers are uniformly sampled
in $\bbS^1$: $s=\frac{2\pi j}{N_s}$, $j=0, \dots,N_s-1$ and $r=\frac{2\pi k}{N_r}$, $k=0, \dots,
N_r-1$, where $N_s=N_r$ for simplicity.

\begin{figure}[ht]
  \centering
  \begin{tabular}{c@{}c@{}c}
    \subfloat[$\eta(x)$]{
      \includegraphics[width=0.2\textwidth]{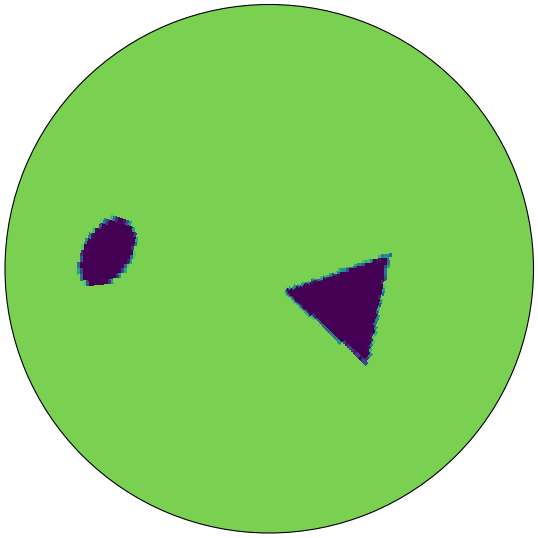} 
    } &
    \subfloat[$\Re(d(s, r))$]{
      \includegraphics[width=0.3\textwidth]{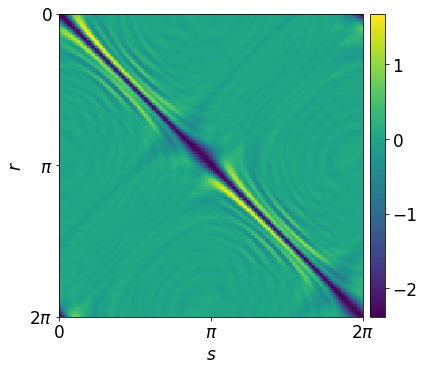} 
    } &
    \subfloat[$\Im(d(s, r))$]{
      \includegraphics[width=0.3\textwidth]{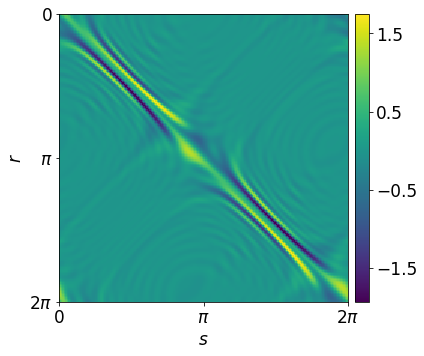} 
    }\\
    \subfloat[$\eta(\theta, \rho)$]{
      \includegraphics[width=0.28\textwidth]{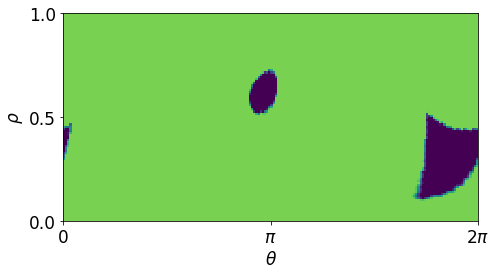} 
    } &
    \subfloat[$\Re(d(m, h))$]{
      \includegraphics[width=0.3\textwidth]{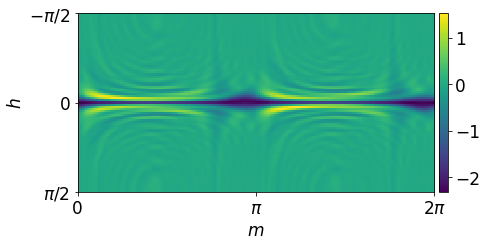} 
    } &
    \subfloat[$\Im(d(m, h))$]{
      \includegraphics[width=0.3\textwidth]{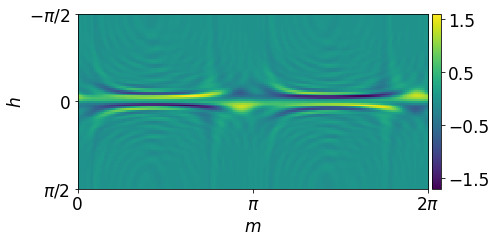} 
    }
  \end{tabular}
  \caption{\label{fig:measurement} Visualization of the scatterer field $\eta$ and the measurement
    data $d$.  The upper figures are the scatterer and the real and imaginary parts of the
    measurement data $d$, respectively. The lower-left figure is the scatterer in the polar
    coordinates. The next two lower figures are the real and imaginary parts of the measurement data
    after the change of variable, respectively.  }
\end{figure}

\subsubsection{Forward map}
Since the domain $\Omega$ is the unit disk, it is convenient to work with the problem in the polar
coordinates. Let $x=(\rho\cos(\theta), \rho\sin(\theta))$, where $\rho\in[0, 1]$ is the radial
coordinate and $\theta\in[0,2\pi)$ is the angular one. Due to the circular tomography geometry that
  $r,s\in[0, 2\pi)$, it is convenient to reparameterize the measurement data by a change of
    variables
\begin{equation}
  m = \frac{r+s}{2}, h = \frac{r-s}{2},\quad\text{or equivalently}\quad r = m+h, s = m-h,
\end{equation}
where all the variables $m, h, r, s$ are understood modulus $2\pi$.  \Cref{fig:measurement} presents
an example of the scatterer field $\eta(x)$ and the measurement data $d(s,r)$ in the original and
transformed coordinates.

With a bit abuse of notation, we can redefine the measurement data
\begin{equation}
  d(m,h) \equiv d(s, r)|_{s=m-h, r=m+h},
\end{equation}
and so does $d_1(m, h)$. At the same time, we redefine
\[
\eta(\theta,\rho) = \eta(\rho\cos(\theta),\rho\sin(\theta)
\]
in the polar coordinates.  Since the first order approximation $d_1(m, h)$ is linearly dependent on
$\eta(\theta,\rho)$, there exists a kernel distribution $K(m, h, \theta,\rho)$ such that
\begin{equation}
  d_1(m, h) = \int_{0}^1\int_{0}^{2\pi}K(m, h, \theta, \rho)\eta(\theta, \rho)\dd\rho\dd\theta.
\end{equation}
Since the domain is the unit disk centered at origin and the background velocity field $c_0=1$ is
constant, the whole problem is equivariant to rotation. In this case, the system can be dramatically
simplified due to the following proposition.
\begin{proposition}\label{pro:convolution}
  There exists a function $\kappa(h, \rho, \cdot)$ periodic in the last parameter such that $K(m, h,
  \theta, \rho) = \kappa(h, \rho, m-\theta)$ or equivalently,
  \begin{equation}\label{eq:convolution}
    d_1(m,h) = \int_0^1 \int_0^{2\pi} \kappa(h,\rho,m-\theta) \eta(\rho,\theta) \dd\theta \dd\rho.
  \end{equation}
\end{proposition}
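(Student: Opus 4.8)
The plan is to prove the statement by an explicit computation starting from the first-order formula \eqref{eq:d1}, since $d_1$ is already given there in closed form and the kernel can simply be read off once we pass to polar coordinates. Conceptually the result is forced by rotational equivariance: rotating $\eta$ by an angle $\alpha$ rotates both the source and receiver directions by $\alpha$, hence sends $m=\tfrac{r+s}{2}\mapsto m+\alpha$ while leaving $h=\tfrac{r-s}{2}$ fixed, so the kernel can only depend on $m$ and $\theta$ through $m-\theta$. I will use the direct calculation to make this precise and, as a bonus, to obtain $\kappa$ explicitly.

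First I would substitute $x=(\rho\cos\theta,\rho\sin\theta)$ into \eqref{eq:d1} and evaluate the phase. Using $\unit{r}\cdot x=\rho\cos(r-\theta)$ and $\unit{s}\cdot x=\rho\cos(s-\theta)$, the exponent becomes $-i\omega\rho\bigl[\cos(r-\theta)-\cos(s-\theta)\bigr]$. The decisive step is to insert $r=m+h$ and $s=m-h$ and apply the sum-to-product identity $\cos(A+B)-\cos(A-B)=-2\sin A\sin B$ with $A=m-\theta$ and $B=h$. This collapses the bracket to $-2\sin(m-\theta)\sin h$, so the exponent depends on the pair $(m,\theta)$ only through the difference $m-\theta$. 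The Jacobian of the polar change of variables contributes a factor $\rho$ through $\dd x=\rho\dd\rho\dd\theta$.

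Collecting these, \eqref{eq:d1} becomes
\begin{equation*}
  d_1(m,h)=\int_0^1\int_0^{2\pi}\rho\,e^{2i\omega\rho\sin(m-\theta)\sin h}\,\eta(\theta,\rho)\dd\theta\dd\rho,
\end{equation*}
which is exactly of the form \eqref{eq:convolution} with
\begin{equation*}
  \kappa(h,\rho,\psi)=\rho\,e^{2i\omega\rho\sin\psi\sin h}.
\end{equation*}
Periodicity of $\kappa$ in its last argument $\psi=m-\theta$ is immediate, since $\sin$ has period $2\pi$, and this finishes the proof.

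I do not expect a serious obstacle here: the argument is a short trigonometric calculation rather than anything structural. The only points requiring care are (i) tracking the signs and the factor $\omega$ in the phase when carrying \eqref{eq:d1} over to polar coordinates, and (ii) not forgetting the Jacobian $\rho$, which must be absorbed into $\kappa$ and not into $\eta$. I would also remark that the explicit form of $\kappa$ confirms the equivariance heuristic above and, incidentally, shows that the dependence on $h$ enters only through $\sin h$, a structural feature that the subsequent neural-network parameterization can exploit.
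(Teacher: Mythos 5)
Your proposal is correct and follows essentially the same route as the paper's proof: substitute polar coordinates into \cref{eq:d1}, collapse the phase via the sum-to-product identity to $2\rho\sin(h)\sin(m-\theta)$, absorb the Jacobian $\rho$ into the kernel, and read off $\kappa(h,\rho,y)=\rho\, e^{2i\omega\rho\sin h\sin y}$, which is exactly the paper's kernel. The equivariance remark is a nice conceptual framing but plays no logical role beyond the computation the paper itself performs.
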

\begin{proof}
  A simple calculation shows that the phase $(\unit{r}-\unit{s})\cdot x$ now becomes
  \[
    (\unit{r}-\unit{s})\cdot x = 
    (\unit{(m+h)}-\unit{(m-h)})\cdot (\rho\cos(\theta),\rho\sin(\theta)) = 
    2\rho \sin(h) \sin(\theta-m).
  \]
  Therefore, \cref{eq:d1} turns to
  \[
  d_1(m,h) = \int_0^1  
  \int_{0}^{2\pi} \left(\rho e^{2i\rho\omega \sin h \sin(m-\theta)}\right) \eta(\theta, \rho)\dd\theta \dd\rho.
  \]
  Setting $\kappa(h,\rho,y) = \rho e^{2i\rho\omega \sin h \sin(y)}$ completes the proof.
\end{proof}

\Cref{pro:convolution} shows that $K$ acts on $\eta$ in the angular direction by a convolution,
which allows us to evaluate the map $\eta(\theta,\rho)\to d_1(m, h)$ by a family of 1D convolutions,
parameterized $\rho$ and $h$.

\paragraph{Discretization.}
Until now, the discussion is in the continuous space. The discretization and solution method of the
Helmholtz equation will be discussion along with the numerical results. With a slight abuse of
notation, the same symbols will be used to denote the discrete version of the continuous data and
kernels and the discrete version of \cref{eq:d1,eq:convolution} takes the form
\begin{equation}\label{eq:discrete}
  d(m,h) \approx \sum_{\rho,\theta}K(m, h, \theta, \rho)\eta(\theta, \rho) 
  = \sum_\rho \left( \kappa(h,\rho,\cdot) * \eta(\cdot, \rho) \right)(m).
\end{equation}

\subsection{Neural network}\label{sec:NN}

\paragraph{Forward map.}
The perturbative analysis shows that, when $\eta$ is sufficiently small, the forward map
$\eta(\theta,\rho)\to d(m, h)$ can be approximated with \cref{eq:discrete}. In terms of the NN
architecture, for small $\eta$, by taking $h$ and $\rho$ directions as the channel dimension, the
forward map \cref{eq:discrete} can be approximated by a one-dimensional (non-local) convolution
layer. For larger $\eta$, this linear approximation is no longer accurate, which can be addressed 
by increasing the number of convolution layers and including nonlinear activations for the neural
network of \cref{eq:discrete}.

\begin{algorithm}[ht]
  \begin{small}
    \begin{center}
      \begin{algorithmic}[1]
        \Require {$c$, $N_{\cnn}\in\bbN^+$, $\eta\in\bbR^{N_{\theta}\times N_{\rho}}$}
        \Ensure {$d \in\bbR^{N_s\times N_h}$}
        \State $\xi = \Convone[c,1,\id](\eta)$ with $\rho$ as the channel direction
        \Comment{Resampling $\eta$ to fit for \BCR}
        \State $\zeta = \BCR[c, N_{\cnn}](\xi)$
        \Comment{Use {\BCR} to implement the convolutional neural network.}
        \State $d = \Convone[N_{h}, 1, \id](\zeta)$
        \Comment{Reconstruct the result $d$ from the output of \BCR.}
        \State \sffont{return} $d$
      \end{algorithmic}
    \end{center}
  \end{small}
  \caption{\label{alg:forward} Neural network architecture for the forward map $\eta\to d$.}
\end{algorithm}

The number of channels, denoted by $c$, is quite problem-dependent and will be discussed in the
numerical section. Notice that the convolution between $\eta$ and $d$ is global in the angular
direction. In order to represent global interactions, the window size $w$ of the convolution layer
and the number of layers $N_{\cnn}$
must satisfy the constraint
\begin{equation}
  w N_{\cnn}\geq N_{\theta},
\end{equation}
where $N_{\theta}$ is number of discretization points on the
angular direction. A simple calculation shows that the number of parameters of the neural network is
$O(wN_{\cnn}c^2)\sim O(N_{\theta}c^2)$. The recently introduced {\BCR} \cite{fan2019bcr} has been
demonstrated to require fewer number of parameters and provide good efficiency for such
interactions. Therefore, we replace the convolution layers with the {\BCR} in our architecture. The
resulting neural network architecture for the forward map is summarized in \cref{alg:forward} with
an estimate of $O(c^2\log(N_{\theta})N_{\cnn})$ parameters. The components of 
\cref{alg:forward} are detailed below.
\begin{itemize}
\item $\xi=\Convone[c, w, \phi](\eta)$, which maps $\eta\in \bbR^{N_{\theta}\times N_{\rho}}$ to
  $\xi\in\bbR^{N_{\theta}\times c}$ is the one-dimensional convolution layer with window size $w$,
  channel number $c$, activation function $\phi$ and period padding on the first direction.
\item {\BCR} is a multiscale neural network motivated by the data-sparse nonstandard wavelet representation of the
  linear operators \cite{bcr}. It processes the information at different scale separately and each
  scale can be understood as a {\em local} convolutional neural network. The one-dimensional
  $\zeta=\BCR[c, N_{\cnn}](\xi)$ maps $\xi\in\bbR^{N_{\theta}\times c}$ to
  $\zeta\in\bbR^{N_{\theta}\times c}$ where $c$ is the number of wavelets in each scale and
  $N_{\cnn}$ denotes the number of layers in the local convolutional neural network in each scale.
  The readers are referred to \cite{fan2019bcr} for more details on the \BCR.
\end{itemize}

\paragraph{Inverse map.}
As we have seen, if $\eta$ is sufficiently small, the forward map can be approximated by $d \approx
K \eta$, the operator notation of the discretization \cref{eq:discrete}. Here $\eta$ is a vector
indexed by $(\theta, \rho)$, $d$ is a vector indexed by $(m, h)$, and $K$ is a matrix with row
indexed by $(m, h)$ and column indexed by $(\theta, \rho)$.

The filtered back-projection method \cite{norton1979ultrasonic} suggests the following formula to
recover $\eta$:
\begin{equation}
  \eta\approx (K^\T K + \epsilon I)^{-1} K^\T d,
\end{equation}
where $\epsilon$ is a regularization parameter. The first piece $K^\T d$ can also be written as a
family of convolutions as well
\begin{equation}
  (K^\T d)(\theta,\rho) = \sum_{h} (\kappa(h, \rho, \cdot) * d(\cdot, h))(\theta).
\end{equation}
The application of $K^\T$ to $d$ can be approximated with a neural network similar to the one for
$K$ in \cref{alg:forward}, by reversing the order. The second piece $(K^\T K + \epsilon I)^{-1}$ is
a pseudo-differential operator in the $(\theta, \rho)$ space and it is implemented with several
two-dimensional convolutional layers for simplicity. Putting two pieces together, the resulting
architecture for the inverse map is summarized in \cref{alg:inverse} and illustrated in
\cref{fig:inverse}. Here, $\Convtwo[c_2, w, \phi]$ used in \cref{alg:inverse} is a two-dimensional
convolution layer with window size $w$, channel number $c_2$, activation function $\phi$ and
periodic padding on the first direction and zero padding on the second direction. The selection of
the hyper-parameters in \cref{alg:inverse} will be discussed in the numerical section.

\begin{algorithm}[htb]
  \begin{small}
    \begin{center}
      \begin{algorithmic}[1]
        \Require {$c, c_2$, $w$, $N_{\cnn}$, $N_{\cnn2}\in\bbN^+$, $d\in\bbR^{N_{s}\times N_h}$}
        \Ensure {$\eta\in\bbR^{N_{\theta}\times N_{\rho}}$}
        \Statex \# \textit{Application of $K^\T$ to $d$}
        \State $\zeta = \Convone[c, 1, \id](d)$ with $h$ as the channel direction
        \State $\xi = \BCR[c, N_{\cnn}](\zeta)$
        \State $\xi^{(0)} = \Convone[N_\rho, 1, \id](\xi)$
        \Statex \# \textit{Application of $(K^\T K+\epsilon I)^{-1}$}
        \For{$k$ from $1$ to $N_{\cnn2}-1$}
        \State $\xi^{(k)} = \Convtwo[c_2, w, \relu](\xi^{(k-1)})$
        \EndFor
        \State $\eta = \Convtwo[1, w, \id](\xi^{(N_{\cnn2}-1)})$
        \State \sffont{return} $d$
      \end{algorithmic}
    \end{center}
  \end{small}
  \caption{\label{alg:inverse} Neural network architecture for the inverse problem $d \to \eta$.}
\end{algorithm}
\begin{figure}[ht]
  \centering
  \includegraphics[width=0.9\textwidth]{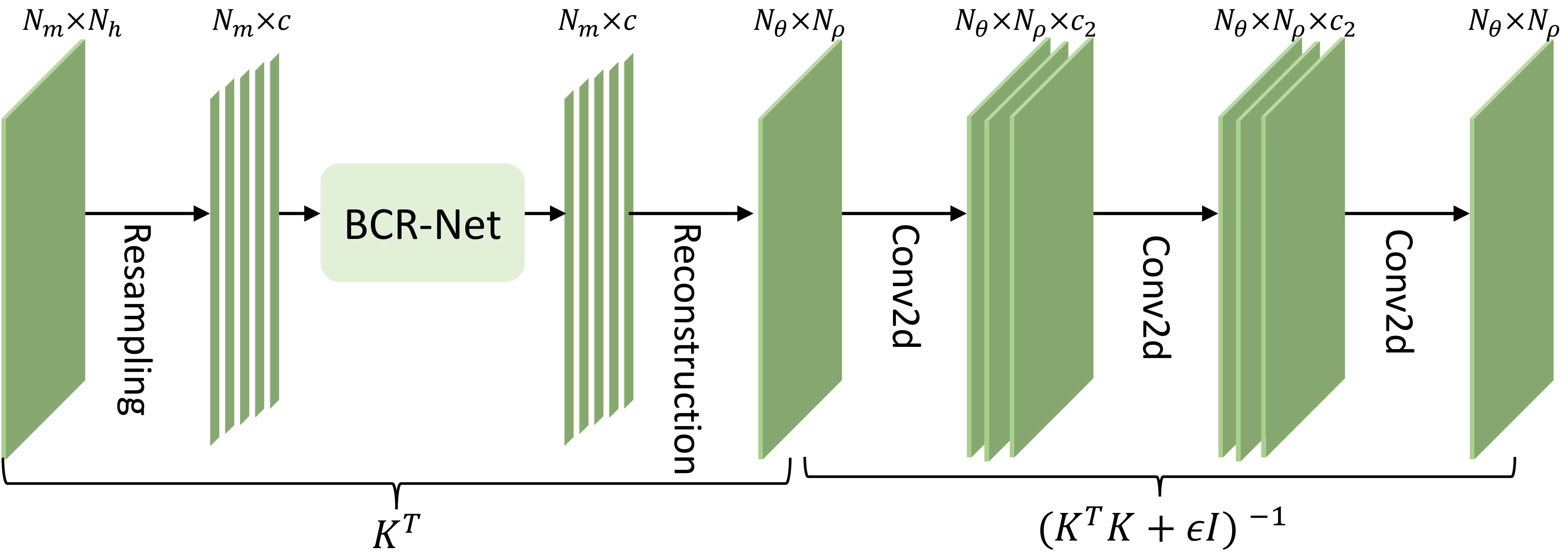}
  \caption{ \label{fig:inverse} Neural network architecture for the inverse map of far field pattern
  problem.}
\end{figure}

\subsection{Numerical examples}
This section reports the numerical setup and results of the proposed neural network architecture in
\cref{alg:inverse} for the inverse map $d\to \eta$.

\subsubsection{Experimental setup}
Since the scatterer $\eta$ is compactly supported in the unit disk $\Omega$, we embed $\Omega$ into
the square domain $[-1, 1]^2$ and solve the Helmholtz equation \eqref{eq:helmholtz} in the
square. In the numerical solution of the Helmholtz equation, we discretize $[-1,1]^2$ with a uniform
Cartesian mesh with $192$ points in each direction by a finite difference scheme (frequency
$\omega=16$ and $12$ points used per wavelength). The perfectly matched layer
\cite{berenger1996perfectly} is used to deal with the Sommerfeld boundary condition and the solution
of the discrete system can be accelerated with appropriate preconditioners (for example,
\cite{engquist2011pml}).

In the polar coordinates of $\Omega$, $(\theta, \rho) \in [0,2\pi)\times [0,1]$ is partitioned by
uniformly Cartesian mesh with $192\times 96$ points, i.e., $N_\theta=192$ and $N_{\rho}=96$.
Given the values of $\eta$ in the Cartesian grid, the values $\eta(\theta,\rho)$ used in
\cref{alg:inverse} in the polar coordinates are computed via linear interpolation.

The number of sources and receivers are $N_s=N_r=192$. The measurement data $d(s,r)$ is generated by
solving the Helmholtz equation $N_s$ times with different incident plane wave.  For the change of
variable of $(s, r)\to (m, h)$, linear interpolation is used to generate the data $d(m, h)$ from
$d(s, r)$. In the $(m,h)$ space, $N_m=192$ for $m\in[0,2\pi)$ and $N_h=96$ for $h\in(-\pi/2,
  \pi/2)$.  Since the measurement data is complex, the real and imaginary parts can be treated
  separately as two channels. The actual simulation suggests that using only the real part (or the
  imaginary part) as input for \cref{alg:inverse} is enough to generate good results.  

The NN in \cref{alg:inverse} is implemented in Keras \cite{keras} on top of TensorFlow \cite{tensorflow}.
The loss function is taken to be the mean squared error and the optimizer used is Nadam
\cite{dozat2015incorporating}. The parameters of the network are initialized by Xavier
initialization \cite{glorot2010understanding}.  Initially, the batch size and the learning rate is
firstly set as $32$ and $10^{-3}$, respectively, and the NN is trained with $100$ epochs. We then
increase the batch size by a factor of $2$ till $512$ with the learning rate unchanged, and next
decrease the learning rate by a factor $10^{1/2}$ down to $10^{-5}$ with the batch size fixed at
$512$. For each batch size and learning rate configuration, the NN is trained with $50$ epochs.  The
hyper-parameters used for \cref{alg:inverse} are $N_{\cnn}=6$, $N_{\cnn2}=5$, and $w=3\times 3$. The
selection of the channel number $c$ will be studied next.

\subsubsection{Results}
For a fixed $\eta$, $d(m, h)$ stands for the \emph{exact} measurement data solved by numerical
discretization of \cref{eq:helmholtz}. The prediction of the NN from $d(m,h)$ is denoted by
$\eta^{\NN}$.  The metric for the prediction is the peak signal-to-noise ratio (PSNR), which is
defined as
\begin{equation}\label{eq:psnr}
  \mathrm{PSNR} = 
  10 \log_{10}\left(\frac{\mathrm{Max}^2}{\mathrm{MSE}}\right),
  ~~\mathrm{Max} = \max_{ij}(\eta_{ij}) - \min_{ij}(\eta_{ij}),
  ~~\mathrm{MSE} = \frac{1}{N_{\theta}N_{\rho}}\sum_{i,j}|\eta_{i,j}-\eta_{i,j}^{\NN}|^2.
\end{equation}
For each experiment, the test PSNR is then obtained by averaging \cref{eq:psnr} over a given set of
test samples. The numerical results presented below are obtained by repeating the training process
five times, using different random seeds for the NN initialization.

The numerical experiments focus on the shape reconstruction setting \cite{kirsch1999factorization,
  kirsch2008factorization, colton2018looking}, where $\eta$ are often piecewise constant inclusions.
Here, the scatterer field $\eta$ is assumed to be the sum of $\Nshape$ piecewise constant shapes.
For each shape, it can be either triangle, square or ellipse, its direction is uniformly random over
the unit circle, its position is uniformly sampled in the disk, and its inradius is sampled from the
uniform distribution $\mU(0.1, 0.2)$. When a shape is an ellipse, the width and height are sampled from
the uniform distribution $\mU(0.1, 0.2)$ and $\mU(0.05, 0.1)$. It is also required that each shape
lies in the disk and there is no intersection between every two shapes. We generate two dataset for
$\Nshape=2$ and $\Nshape=4$, and each has $20,480$ samples $\{(\eta_i, d_i)\}$ with $16,384$ used
for training and the remaining $4,096$ for testing.

\begin{figure}[h!]
  \centering
  \includegraphics[width=0.4\textwidth]{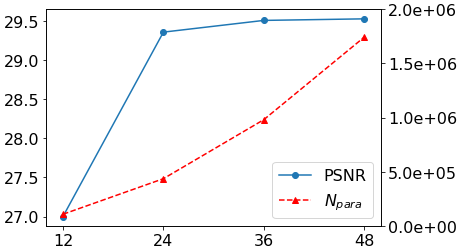}
  \caption{\label{fig:trErr} The test PSNR for different channel numbers $c$ for the dataset
    $\Nshape=4$.  
  }
\end{figure}

\begin{figure}[h!]
  \centering
  \begin{tabular}{l@{}cccc}
    & reference & $\eta^{\NN}$ with $\delta=0$ &
    $\eta^{\NN}$ with $\delta=10\%$ &
    $\eta^{\NN}$ with $\delta=100\%$\\
        \rotatebox{90}{\phantom{ab}\parbox{3cm}{\phantom{$\Nshape$\\}$\Nshape=4$}}\phantom{a} &
    \includegraphics[width=0.22\textwidth]{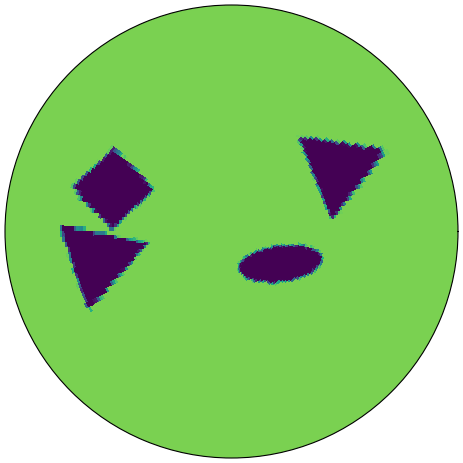}       &
    \includegraphics[width=0.22\textwidth]{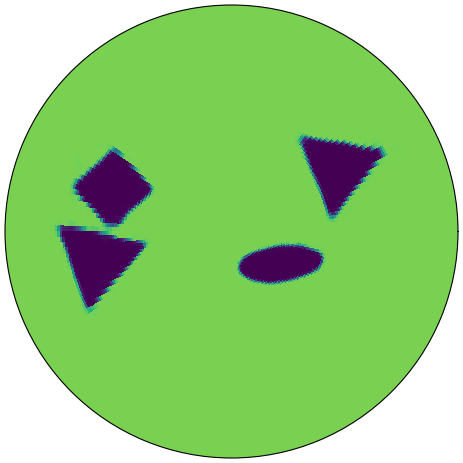}   &
    \includegraphics[width=0.22\textwidth]{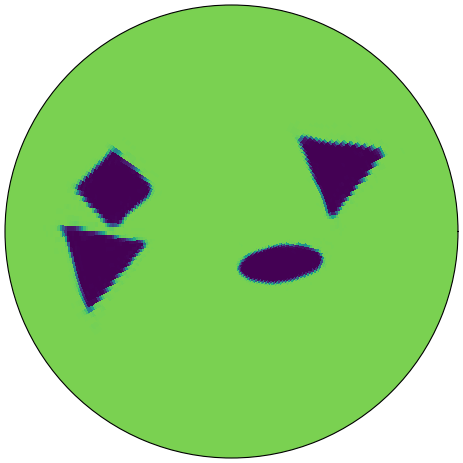} & 
    \includegraphics[width=0.22\textwidth]{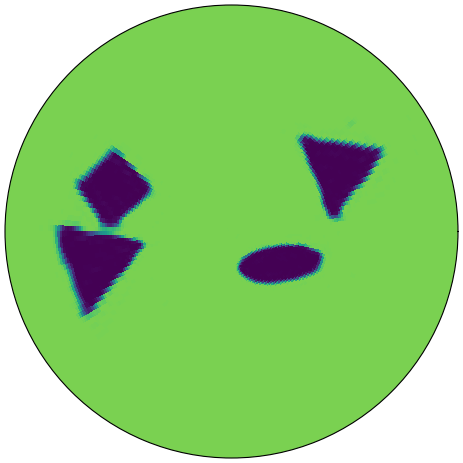} \\ 

        \rotatebox{90}{\phantom{ab}\parbox{3cm}{\phantom{$\Nshape$\\}$\Nshape=2$}}\phantom{a} &
    \includegraphics[width=0.22\textwidth]{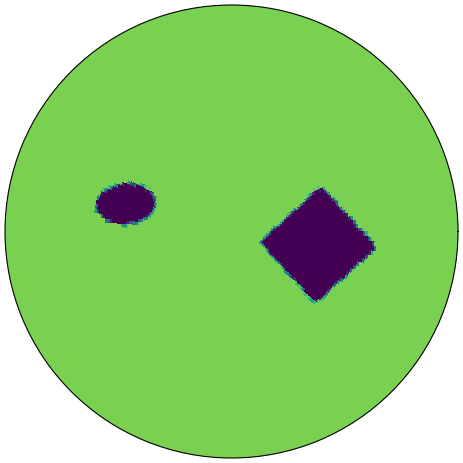}       &
    \includegraphics[width=0.22\textwidth]{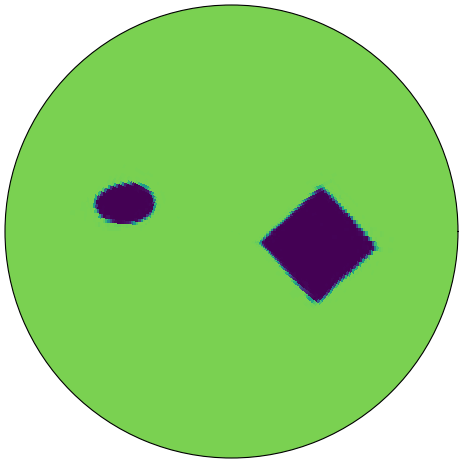}   &
    \includegraphics[width=0.22\textwidth]{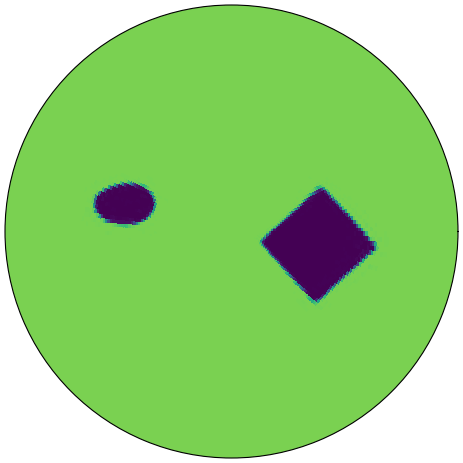} &
    \includegraphics[width=0.22\textwidth]{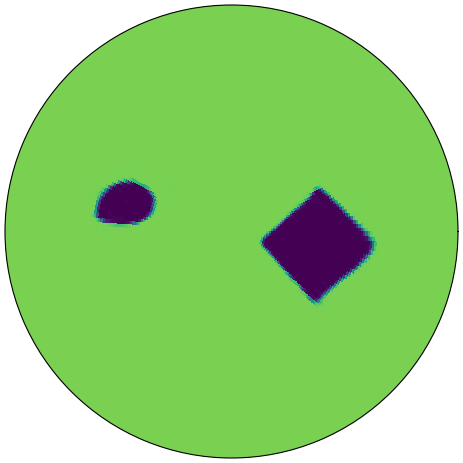} 
  \end{tabular}
  \caption{\label{fig:prediction} The far field pattern problem. NN prediction of a sample with test
    data $\Nshape=4$ (first row) and a sample with $\Nshape=2$ (second row), at different noise
    level $\delta=0$, $10\%$ and $100\%$.  }
\end{figure}

\begin{figure}[h!]
  \centering
  \begin{tabular}{l@{}cccc}
    & reference & $\eta^{\NN}$ with $\delta=0$ &
    $\eta^{\NN}$ with $\delta=10\%$ &
    $\eta^{\NN}$ with $\delta=100\%$\\
    \rotatebox{90}{\phantom{ab}\parbox{3cm}{Train: $\Nshape=2$\\ Test: \phantom{n}$\Nshape=4$}}\phantom{a} &
    \includegraphics[width=0.22\textwidth]{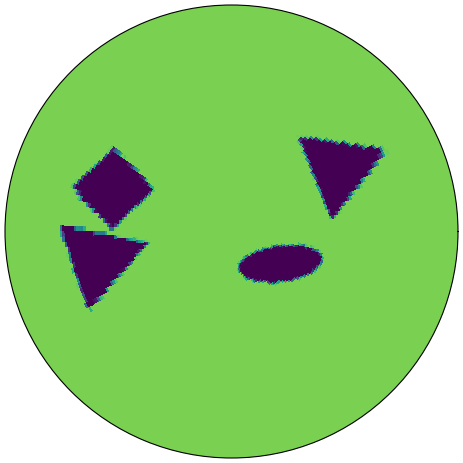}       &
    \includegraphics[width=0.22\textwidth]{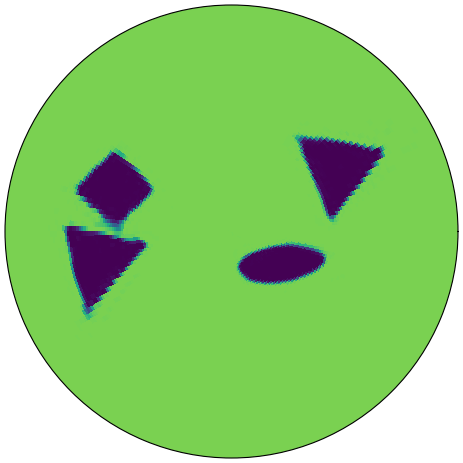}   &
    \includegraphics[width=0.22\textwidth]{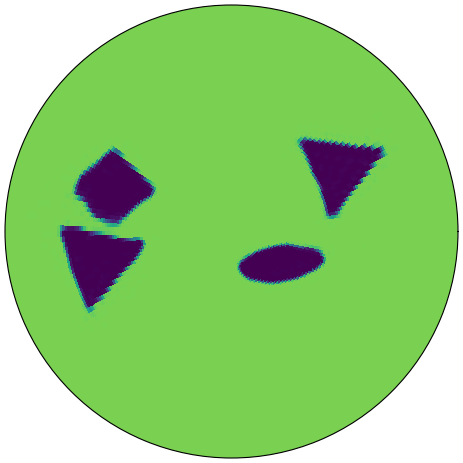} & 
    \includegraphics[width=0.22\textwidth]{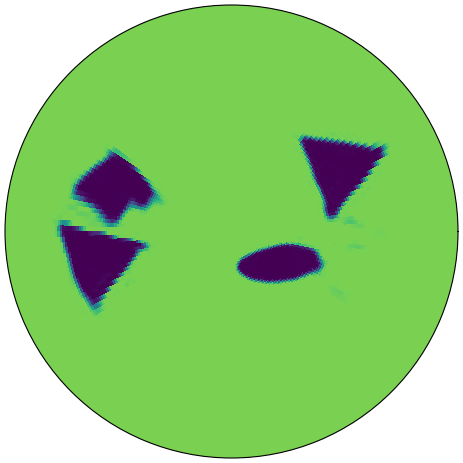} \\ 

    \rotatebox{90}{\phantom{ab}\parbox{3cm}{Train: $\Nshape=4$\\ Test: \phantom{n}$\Nshape=2$}}\phantom{a} &
    \includegraphics[width=0.22\textwidth]{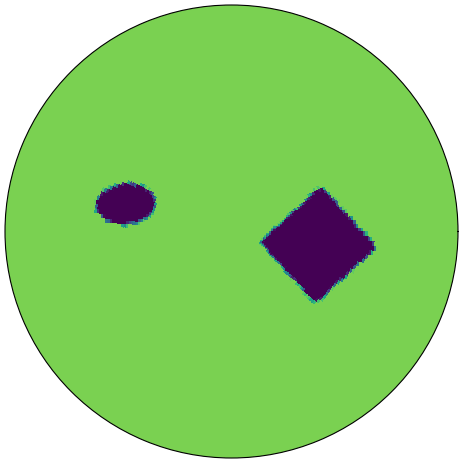}       &
    \includegraphics[width=0.22\textwidth]{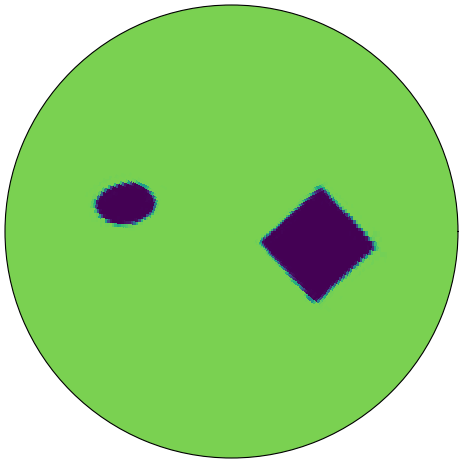}   &
    \includegraphics[width=0.22\textwidth]{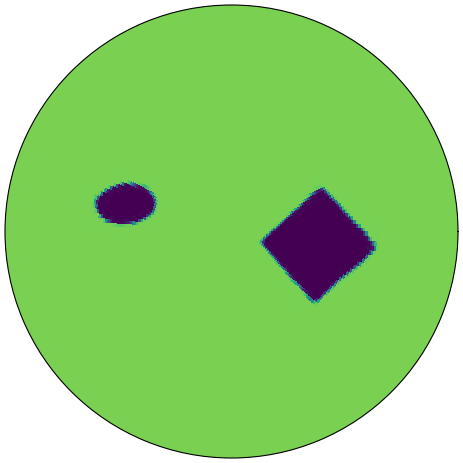} &
    \includegraphics[width=0.22\textwidth]{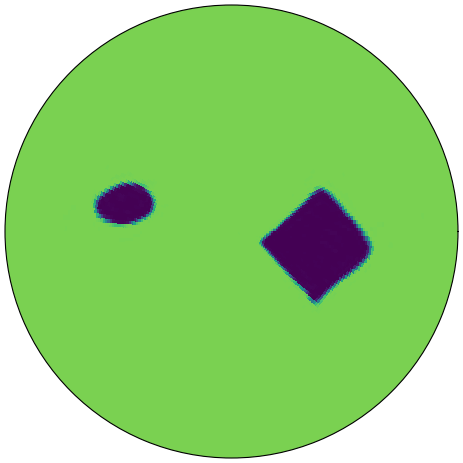} 
  \end{tabular}
  \caption{\label{fig:generalization} The far field pattern problem: NN generalization test. In the
    first row, the NN is trained by the data with the number of shapes $\Nshape=2$ with noise level
    $\delta=0$, $10\%$ or $100\%$ and tested by the data with $\Nshape=4$ at the same noise
    level. In the second row, the NN is trained by the data with $\Nshape=4$ and tested by the data
    with $\Nshape=2$.  }
\end{figure}

We first study the choice of channel number $c$ in \cref{alg:inverse}. \Cref{fig:trErr} presents
the test PSNR and the number of parameters for different channel number $c$ for the dataset
$\Nshape=4$.  As the channel number $c$ increases, the test PSNR first increases consistently and
then saturates. Note that the number of parameters of the neural network is
$O(c^2\log(N_{\theta})N_{\cnn})$. The choice of $c=24$ offers a reasonable balance between accuracy
and efficiency, and the total number of parameters is $439$K.

To model the uncertainty in the measurement data, we introduce noises to the measurement data by
defining $d_i^{\delta}\equiv (1+ Z_i \delta)d_i$, where $Z_i$ is a Gaussian random variable with
zero mean and unit variance and $\delta$ controls the signal-to-noise ratio. 
For each noisy level $\delta=0$, $10\%$, $100\%$, an independent NN is trained and tested with the
noisy dataset $\{(d_i^\delta,\eta_i)\}$.

\Cref{fig:prediction} collects, for different noise level $\delta=0$, $10\%$, $100\%$, samples for
different $\Nshape=2$, $4$.  The NN is trained with the datasets generated in the same way as the
test data. When there is no noise in the measurement data, the NN consistently gives accurate
predictions of the scatterer field $\eta$, in the position, shape, and direction of the shapes. In
particular, for the case $\Nshape=4$, the square in the left part of the domain is close to a
triangle. The NN is able to distinguish the shapes and gives a clear boundary of each.  For the
small noise levels, for example, $\delta=10\%$, the boundary of the shapes slightly blurs while the
position, direction and shape are still correct. As the noise level $\delta$ increases, the boundary
of the shapes blurs more, but the position and direction of shape are always correct. 

The next test is about the generalization of the proposed NN. We first train the NN with one data
set ($\Nshape=2$ or $4$) with noise level $\delta=0$, $10\%$ or $100\%$ and test with the other
($\Nshape=4$ or $2$) with the same noise level. The results, presented in
\cref{fig:generalization}, indicate that the NN trained by the data with two inclusions is capable
of recovering the measurement data of the case with four inclusions, and vice versa. Moreover, the
prediction results are comparable with those in \cref{fig:prediction}. This shows that the trained
NN is capable of predicting beyond the training scenario.

\section{Seismic imaging}\label{sec:seismic}

\subsection{Mathematics analysis}
In the seismic imaging case, $\Omega$ is a rectangular domain with Sommerfeld radiation boundary
condition specified, as illustrated in \cref{fig:seismic}. Following \cite{hulme2004general,
  palermo2016engineered}, we apply periodic boundary conditions in the horizontal direction to our
problem for simplicity. This setup is also appropriate for studying periodic material, such as
phononic crystals \cite{palermo2016engineered, hussein2014dynamics}, etc. After appropriate
rescaling, we consider the domain $\Omega=[0, 1]\times [0, Z]$, where $Z$ is a fixed constant. Both
the sources $S=\{x_s\}$ and the receivers $R=\{x_r\}$ are a set of uniformly sampled points along a
horizontal line near the top surface of the domain, and $x_r=(r, Z)$ and $x_s=(s, Z)$, for $r,s\in
[0,1]$.

\begin{figure}[!ht]
  \centering
  \includegraphics[width=0.45\textwidth]{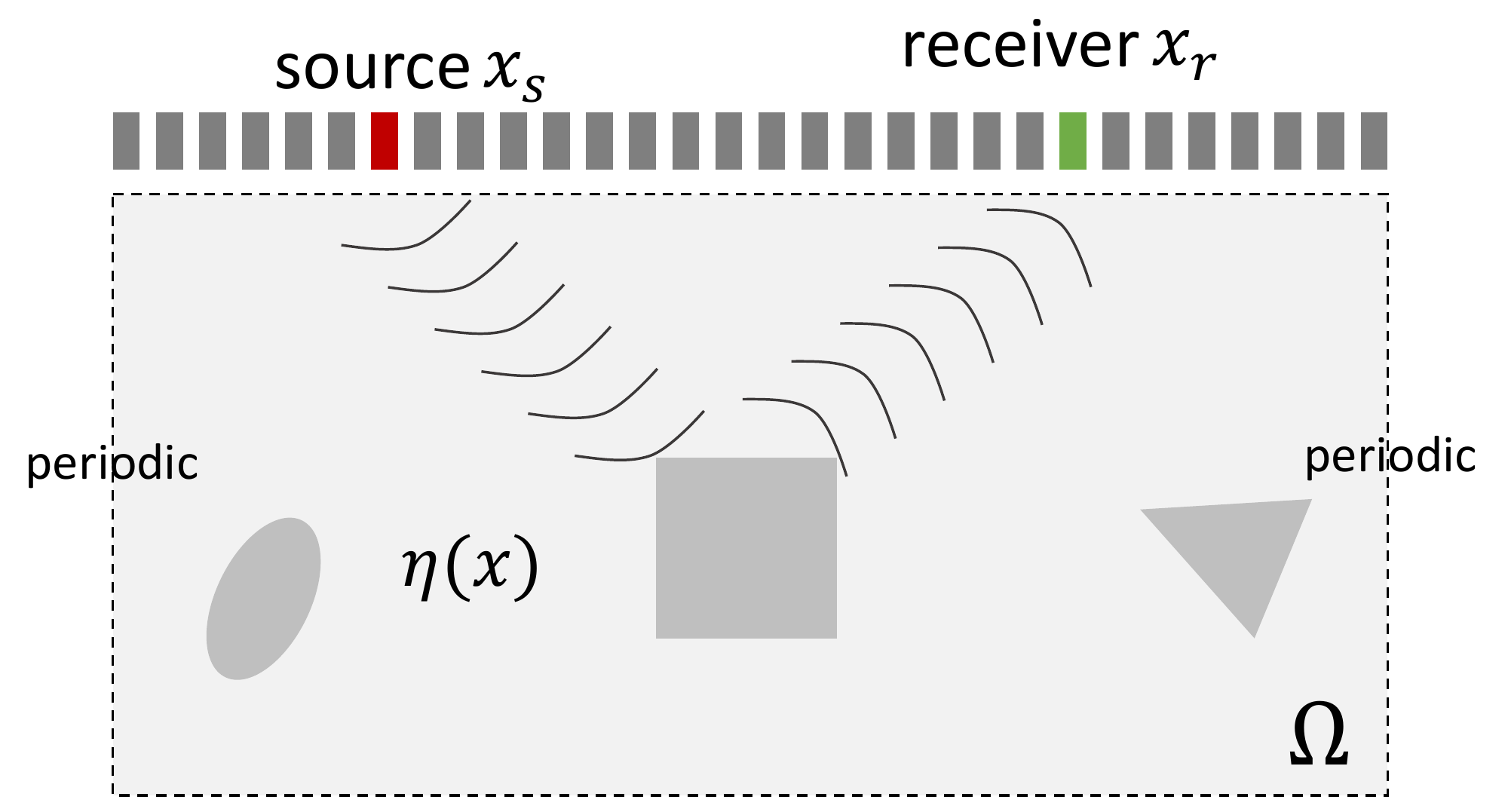}
  \caption{\label{fig:seismic} Illustration of a simple seismic imaging setting. The sources and
    receivers are located near the surface level (top) of the domain $\Omega$. The scatterer field
    $\eta(x)$ is assumed to be well-separated from the sources and the receivers.
  }
\end{figure}

Using the background velocity field $c_0(x)$, we first introduce the background Helmholtz operator
$L_0 = -\Delta - \omega^2/c_0(x)^2$.  For each source $s$, we place a delta source at point $x_s$
and solve the the Helmholtz equation (in the differential imaging setting)
\begin{equation} \label{eq:uss}
  (L_0 - \eta) (G_0(x, x_s) + u^s(x)) = 0,
\end{equation}
where $G_0 = L_0^{-1}$ be the Green's functions of the background Helmholtz operator $L_0$. The
solution is recorded at points $x_r$ for $r\in R$ and the whole dataset is $ d(s,r) \equiv
u^s(x_r)$. In order to understand better the relationship between $\eta(x)$ and $d(s,r)$, let us
perform a perturbative analysis for small $\eta$. Expanding \eqref{eq:uss} gives rise to
\[
(L_0 u^s)(x) = \eta(x) G_0(x, x_s) + \ldots.
\]
Solving this leads to
\[
d(s,r) = u^s(r) = \int G_0(x_r, x)G_0(x, x_s) \eta(x) \dd x + \ldots.
\]
Again, we introduce $d_1(s,r) = \int G_0(x_r,x)G_0(x,x_s) \eta(x) \dd x$ as the leading order linear
term in terms of $\eta$.

\begin{figure}[ht]
  \centering
  \begin{tabular}{c@{}c@{}c}
    \subfloat[$\eta(x)$]{ \includegraphics[width=0.3\textwidth]{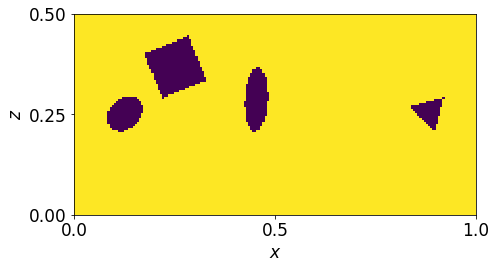} } &
    \subfloat[$\Re(d(s,r))$]{ \includegraphics[width=0.3\textwidth]{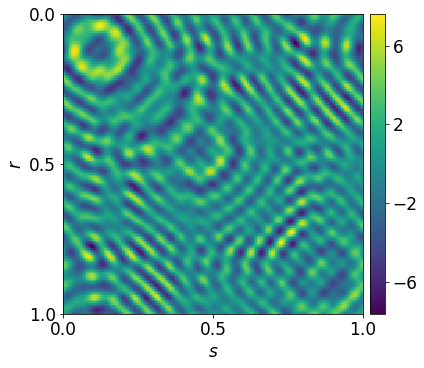} } &
    \subfloat[$\Im(d(s,r))$]{ \includegraphics[width=0.3\textwidth]{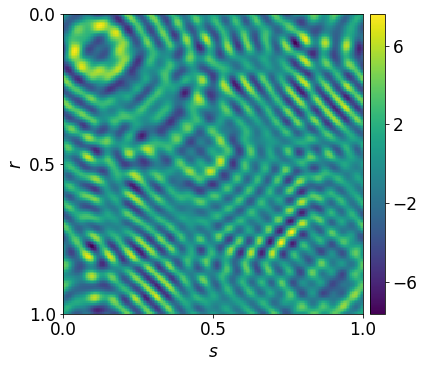} }\\
    &
    \subfloat[$\Re(d(m,h))$]{ \includegraphics[width=0.3\textwidth]{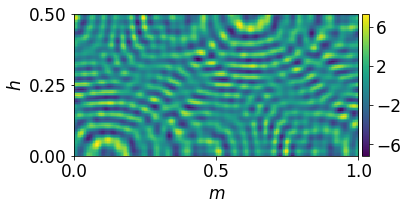} } &
    \subfloat[$\Im(d(m,h))$]{ \includegraphics[width=0.3\textwidth]{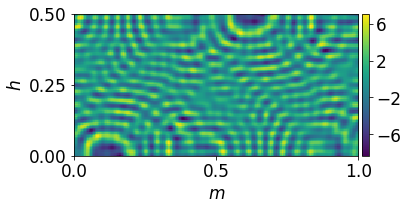} }
  \end{tabular}
  \caption{\label{fig:measurement_seismic} Visualization of the scatterer field $\eta$ and the
    measurement data $d$ for the seismic imaging.  The upper figures are the scatterer and the real
    and imaginary part of the measurement data $d$, respectively.  The lower two figures are the
    real and imaginary part of the measurement data after change of variable.  }
\end{figure}

\Cref{fig:measurement_seismic} gives an example of the scatterer field and the measurement data.
Notice that the strongest signal concentrates at the diagonal of the measurement data $d(s,r)$.
Because of the periodicity in the horizontal direction, it is convenient to rotate the measurement
data by a change of variables as
\begin{equation}
  m = \frac{r+s}{2}, h = \frac{r-s}{2},\quad\text{or equivalently}\quad r = m+h, s = m-h,
\end{equation}
where all the variables $m, h, r, s$ are understood modulus $1$.  With a bit abuse of notation, we
recast the measurement data
\begin{equation}
  d(m,h) \equiv d(s, r)|_{s=m-h, r=m+h},
\end{equation}
and so does for $d_1(m, h)$. At the same time, by letting $x=(p,z)$ where $p$ is horizontal
component of $x$ and $z$ is the depth component, we write $\eta(p,z) = \eta(x)$.  Since $d_1(m, h)$
is linearly dependent on $\eta(p, z)$, there exists a kernel distribution $K(m, h, p, z)$ such that
\begin{equation}
  d_1(m, h) = \int_{0}^Z\int_{0}^{1}K(m, h, p, z)\eta(p, z)\dd z\dd p.
\end{equation}

One of the most common scenario in seismic imaging is that $c_0(x)$ only depends on the depth, i.e.,
$c_0(p, z) \equiv c_0(z)$. Note that in this scenario the whole problem is equivariant to translation
in the horizontal direction. The system can be dramatically simplified due to the following
proposition.
\begin{proposition}\label{pro:convolution_seismic}
  There exists a function $\kappa(h, z, \cdot)$ periodic in the last parameter such that $K(m, h, p,
  z) = \kappa(h, z, m-p)$ or equivalently,
  \begin{equation}\label{eq:convolution_seismic}
    d_1(m,h) = \int_0^Z \int_0^{1} \kappa(h, z, m-p) \eta(p,z) \dd p \dd z.
  \end{equation}
\end{proposition}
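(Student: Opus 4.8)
The plan is to mirror the proof of \cref{pro:convolution}, replacing the rotational symmetry of the far field setup with the horizontal translational symmetry that arises when $c_0$ depends only on depth. The starting point is the linear term
\[
d_1(s,r) = \int_0^Z \int_0^1 G_0(x_r, x)\, G_0(x, x_s)\, \eta(p,z) \dd p \dd z, \qquad x = (p,z),
\]
with $x_s = (s,Z)$ and $x_r = (r,Z)$. The entire content of the proposition is the claim that the product $G_0(x_r,x)\, G_0(x,x_s)$ depends on the horizontal variables $r, s, p$ only through combinations that collapse, after the change of variables $r = m+h$, $s = m-h$, into a single dependence on $m - p$.

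First I would establish the key structural fact about the Green's function. Since $c_0(p,z) \equiv c_0(z)$, the background operator $L_0 = -\Delta - \omega^2/c_0(z)^2$ commutes with every horizontal translation $T_a\colon (p,z) \mapsto (p+a, z)$. Because the domain $\Omega = [0,1]\times[0,Z]$ carries periodic boundary conditions in the horizontal direction and the remaining radiation condition acts only through the depth variable, the shift $T_a$ is a symmetry of the whole boundary value problem, so its inverse $G_0 = L_0^{-1}$ inherits horizontal translation invariance. Hence there is a function $g$, periodic of period $1$ in its first argument, with
\[
G_0\big((p_1, z_1), (p_2, z_2)\big) = g(p_1 - p_2,\, z_1,\, z_2).
\]
I expect this step to be the main obstacle, not because it is deep but because it requires care: one must check that the periodic horizontal boundary conditions are genuinely preserved by $T_a$ (so that $p_1 - p_2$ is read modulo $1$ and $g$ is periodic in that slot), and that imposing the radiation condition only in the depth direction does not break the symmetry.

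Granting this, the remainder is the same algebraic substitution as in \cref{pro:convolution}. Evaluating the two factors gives $G_0(x_r, x) = g(r - p,\, Z,\, z)$ and $G_0(x, x_s) = g(p - s,\, z,\, Z)$. Writing $r = m+h$, $s = m-h$ and setting $y = m - p$, one has $r - p = y + h$ and $p - s = h - y$, so the product becomes
\[
G_0(x_r, x)\, G_0(x, x_s) = g(y+h,\, Z,\, z)\, g(h - y,\, z,\, Z),
\]
which depends on $m$ and $p$ only through $y = m - p$. Defining
\[
\kappa(h, z, y) = g(y + h,\, Z,\, z)\, g(h - y,\, z,\, Z),
\]
which is periodic in $y$ because $g$ is periodic in its first slot, yields $K(m,h,p,z) = \kappa(h, z, m-p)$ and hence \cref{eq:convolution_seismic}, completing the proof.
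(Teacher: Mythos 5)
Your proof is correct and follows essentially the same route as the paper's: both deduce horizontal translation invariance of $G_0$ from $c_0(p,z)=c_0(z)$ and the periodic horizontal boundary conditions, write $G_0((p_1,z_1),(p_2,z_2))=g(p_1-p_2,z_1,z_2)$, and substitute $r=m+h$, $s=m-h$ to obtain $\kappa(h,z,y)=g(y+h,Z,z)\,g(h-y,z,Z)$, which is exactly the paper's kernel. Your additional remarks on why the translation symmetry is preserved and why $g$ is periodic in its first slot are slightly more careful than the paper's one-line assertion, but they do not change the argument.
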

\begin{proof}
  Because of $c_0(p, z) = c_0(z)$ and the periodic boundary conditions in the horizontal direction,
  the Green's function of the background Helmholtz operator $G_0$ is translation invariant on the
  horizontal direction, i.e., there exists a $g_0(\cdot,\cdot,\cdot)$ such that $G( (x_1, x_2),
  (y_1, y_2)) = g(x_1-y_1, x_2, y_2)$. Therefore, 
  \[
    \begin{aligned}
      d_1(m, h) &= \int_0^Z\int_0^1 G_0( (m+h, Z), (p, z)) G_0( (p, z), (m-h, Z)) \eta(p, z)\dd z \dd p \\
      &= \int_0^Z\int_0^1 g_0(m-p+h, Z, z)g_0(p-m+h, z, Z) \eta(p, z) \dd z\dd p.
    \end{aligned}
  \]
  Setting $\kappa(h, z, y) = g_0(y+h, Z, z)g_0(-y+h, z, Z)$ completes the proof.
\end{proof}

To discrete the problem, the scatterer $\eta(p, z)$ will be represented on a uniform mesh of
$[0,1]\times [0,Z]$.  With a slight abuse of notation, we shall use the same symbols to denote the
discretization version of the continuous kernels and variables. The discrete version of
\cref{eq:convolution_seismic} then becomes
\begin{equation}\label{eq:discrete_seismic}
  d(m,h) \approx \sum_{z}\left( \kappa(h,z,\cdot) * \eta(\cdot, z) \right)(m).
\end{equation}

\subsection{Neural network and numerical examples}

\subsubsection{Neural network}

Note that the key of the neural network architecture in \cref{alg:inverse} for the far field pattern
case is the convolution form in the angular direction in \cref{pro:convolution}. For the seismic
imaging case, \cref{pro:convolution_seismic} is the counterpart of \cref{pro:convolution}. Since the
argument in \cref{sec:NN} remains valid for seismic imaging, the neural network architecture for
seismic imaging is the same as that in \cref{alg:inverse}. However, the hyper-parameters are
problem-dependent.

\subsubsection{Experimental setup}
In the experiment $Z=1/2$ and the domain $\Omega=[0,1]\times [0, Z]$ is discretized with a uniform
Cartesian mesh with $192\times 96$ points with frequency $\omega=16$. The remaining setup of the
numerical solution of the Helmholtz equation is same as that for the far field pattern problem. For
the measurement, we also set the number of sources and receivers as $N_s=N_r=192$. The measurement
data $d(s,r)$ is generated by solving the Helmholtz equation $N_s$ times by placing a delta function
on each source point.  For the change of variable of $(s, r)\to (m, h)$, linear interpolation is
used for generating the data $d(m,h)$ from $d(s, r)$, with $N_m=192$ for $m\in[0,1)$ and $N_h=96$
  for $h\in(0, 1/2)$. In the actual simulation, we use both the real and imaginary part and
  concentrate them on the $h$ direction as the input.

\subsubsection{Results}

The numerical experiments here focus on the shape reconstruction setting, where $\eta$ are piecewise
constant inclusions.  Here, the scatterer field $\eta$ is assumed to be the sum of $\Nshape$
piecewise constant shapes.  For each shape, it can be either triangle, square or ellipse, the
orientation is uniformly random over the unit circle, the position is uniformly sampled in the
$[0,1]\times[0.2,0.4]$, and the circumradius is sampled from the uniform distribution
$\mU(0.1,0.2)$. If the shape is ellipse, its width and height are sampled from the uniform
distribution $\mU(0.08, 0.16)$ and $\mU(0.04, 0.8)$.  It is also required that there is no
intersection between any two shapes. We generate two datasets with $\Nshape=2$ and $\Nshape=4$ and
each has $20,480$ samples $\{(\eta_i, d_i)\}$ with $16,384$ used for training and the remaining
$4,096$ reserved for testing.

\begin{figure}[h!]
  \centering
  \includegraphics[width=0.4\textwidth]{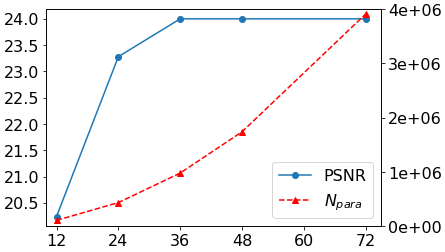}
  \caption{\label{fig:trErr_seismic} The test PSNR for seismic imaging for different channel
    numbers $c$ with $\Nshape=4$.  
  }
\end{figure}

\begin{figure}[h!]
  \centering
  \begin{tabular}{:l@{}cc:}
	\hdashline
    \multirow{2}{*}{\rotatebox{90}{\phantom{abcde}$\Nshape=4$}} &
    \subfloat[reference]{ \includegraphics[width=0.44\textwidth]{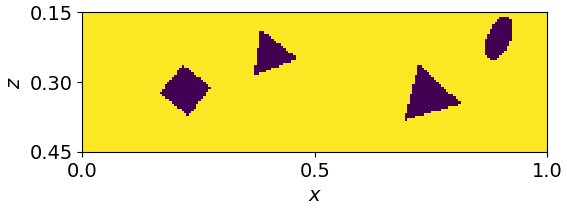} }       &
    \subfloat[$\eta^{\NN}$ with $\delta=0$]{ \includegraphics[width=0.44\textwidth]{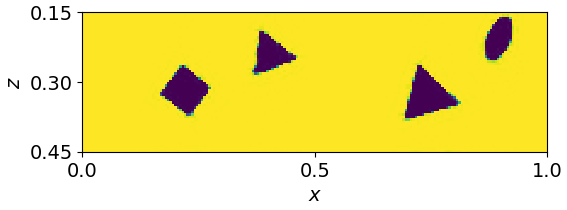} }       \\
    &
    \subfloat[$\eta^{\NN}$ with $\delta=10\%$]{ \includegraphics[width=0.44\textwidth]{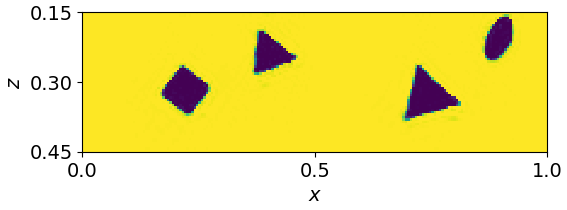} }       &
    \subfloat[$\eta^{\NN}$ with $\delta=100\%$]{ \includegraphics[width=0.44\textwidth]{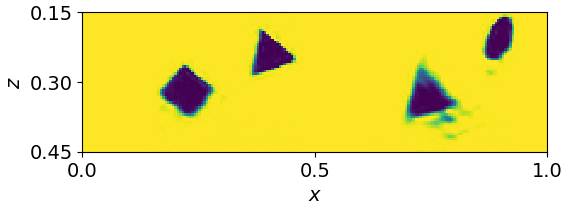} }    \\
	\hdashline
    \multirow{2}{*}{\rotatebox{90}{\phantom{abcde}$\Nshape=2$}} &
    \subfloat[reference]{ \includegraphics[width=0.44\textwidth]{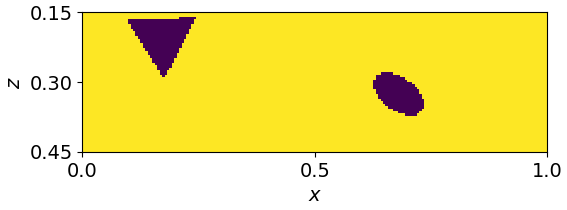} }       &
    \subfloat[$\eta^{\NN}$ with $\delta=0$]{ \includegraphics[width=0.44\textwidth]{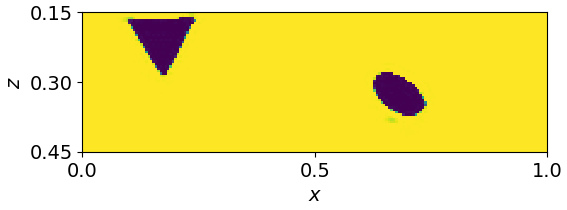} }       \\
    &
    \subfloat[$\eta^{\NN}$ with $\delta=10\%$]{ \includegraphics[width=0.44\textwidth]{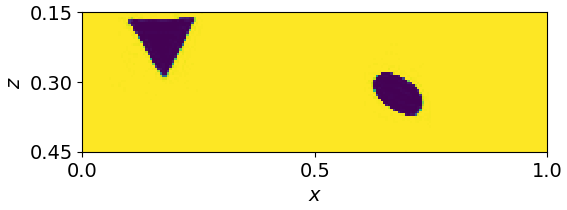} }       &
    \subfloat[$\eta^{\NN}$ with $\delta=100\%$]{ \includegraphics[width=0.44\textwidth]{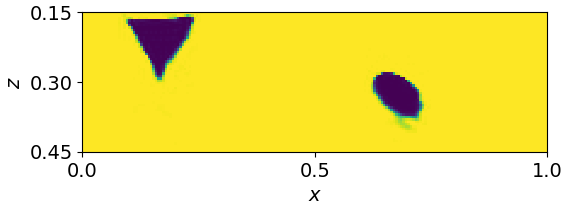} }   \\
	\hdashline
  \end{tabular}
  \caption{\label{fig:prediction_seismic} The seismic imaging problem. NN prediction of a sample
    with test data $\Nshape=4$ (the first two rows) or $\Nshape=2$ (the last two rows), at different
    noise level $\delta=0$, $10\%$ and $100\%$.  }
\end{figure}

\begin{figure}[h!]
  \centering
  \begin{tabular}{:l@{}cc:}
	\hdashline
    	\rotatebox{90}{Test:\phantom{n} $\Nshape=4$} &
    \subfloat[reference]{ \includegraphics[width=0.44\textwidth]{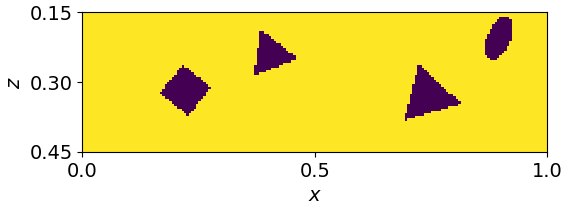} }       &
    \subfloat[$\eta^{\NN}$ with $\delta=0$]{ \includegraphics[width=0.44\textwidth]{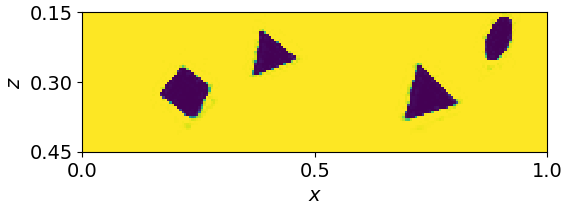} }       \\
	\rotatebox{90}{Train: $\Nshape=2$;} &
    \subfloat[$\eta^{\NN}$ with $\delta=10\%$]{ \includegraphics[width=0.44\textwidth]{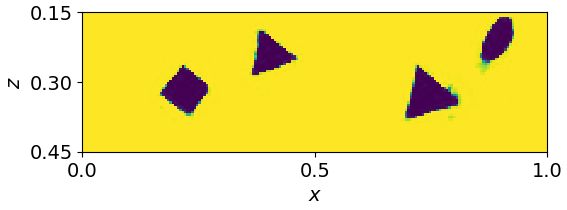} }       &
    \subfloat[$\eta^{\NN}$ with $\delta=100\%$]{ \includegraphics[width=0.44\textwidth]{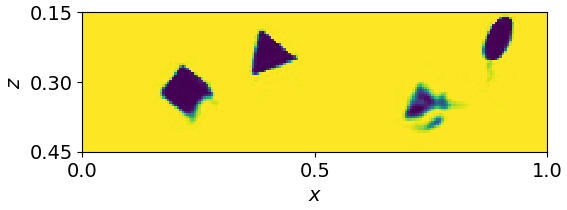} }    \\
	\hdashline
    	\rotatebox{90}{Test:\phantom{n} $\Nshape=2$} &
    \subfloat[reference]{ \includegraphics[width=0.44\textwidth]{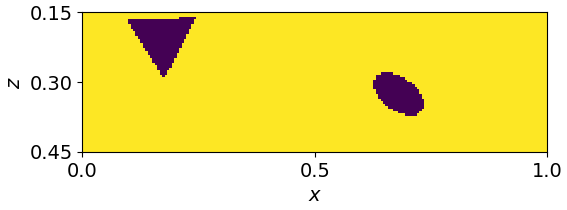} }       &
    \subfloat[$\eta^{\NN}$ with $\delta=0$]{ \includegraphics[width=0.44\textwidth]{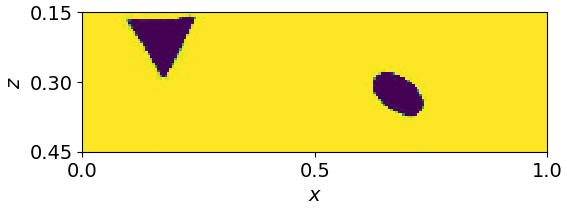} }       \\
	\rotatebox{90}{Train: $\Nshape=4$;} &
    \subfloat[$\eta^{\NN}$ with $\delta=10\%$]{ \includegraphics[width=0.44\textwidth]{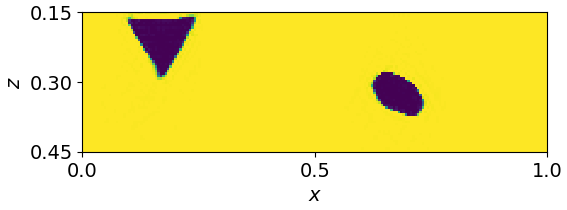} }       &
    \subfloat[$\eta^{\NN}$ with $\delta=100\%$]{ \includegraphics[width=0.44\textwidth]{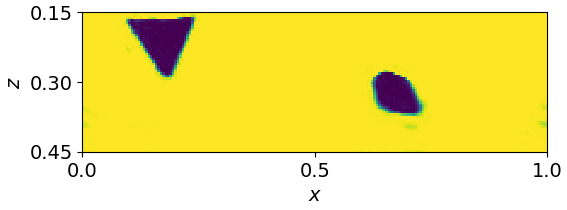} }    \\
	\hdashline
  \end{tabular}
  \caption{\label{fig:generalization_seismic} The seismic imaging problem: NN generalization test.
    In the top figures, the NN is trained by the data with $\Nshape=2$ at noise level $\delta=0$,
    $10\%$ or $100\%$ and tested by the data with $\Nshape=4$. In the bottom figures, the NN is
    trained by the data with $\Nshape=4$ and tested by the data with $\Nshape=2$.  }
\end{figure}

The first study is about the choice of channel number $c$ in
\cref{alg:inverse}. \Cref{fig:trErr} presents the test PSNR and the number of parameters, for
different channel number $c$ on the dataset $\Nshape=4$.  Similar to the far field pattern problem,
as the channel number $c$ increases, the test PSNR first consistently increases and then saturates.
Notice that the number of parameters of the neural network is $O(c^2\log(N_{\theta})N_{\cnn})$. The
choice of $c=36$ is a reasonable balance between accuracy and efficiency and the total number of
parameters is $981$K.

To model the uncertainty in the measurement data, the same method as the far field pattern problem
is used to add noises to the measurement data.  \Cref{fig:prediction_seismic} collects, for
different noise level $\delta=0$, $10\%$, $100\%$, samples for $\Nshape=2$ and $4$, and
\cref{fig:generalization_seismic} presents the generalization test of the proposed NN by training
and testing on different datasets..

\section{Discussions}\label{sec:conclusion}
This paper presents a neural network approach for the two typical problems of the inverse
scattering: far field pattern and seismic imaging. The approach uses the NN to approximate the whole
inverse map from the measurement data to the scatterer field, inspired by the perturbative analysis
that indicates that the linearized forward map can be represented by a one-dimensional convolution
with multiple channels. The analysis in this paper can also be extended to three-dimensional
scattering problems. The analysis of seismic imaging can be easily extended to non-periodic boundary
conditions by replacing the periodic padding in \cref{alg:inverse} with zero padding.

\section*{Acknowledgments}
The work of Y.F. and L.Y. is partially supported by the U.S. Department of Energy, Office of
Science, Office of Advanced Scientific Computing Research, Scientific Discovery through Advanced
Computing (SciDAC) program. The work of L.Y. is also partially supported by the National Science
Foundation under award DMS-1818449.

\bibliographystyle{abbrv}
\bibliography{nn}
\end{document}